\theoremstyle{plain} 
\newtheorem{theorem}{Theorem}
\newtheorem{fact}{Fact}
\newtheorem{proposition}{Proposition}
\begin{document}

\title{Drawing a Rooted Tree as a Rooted $y-$Monotone Minimum Spanning Tree}

\author{Konstantinos Mastakas}

\affil{School of Applied Mathematical and Physical Sciences\\ National Technical University of Athens, Athens, Greece \\ kmast@math.ntua.gr}

\date{}

\maketitle

\begin{abstract}
Given a rooted point set $P$, the \emph{rooted $y-$Monotone Minimum Spanning Tree} (rooted $y-$MMST) of $P$ is the spanning geometric graph of $P$ in which all the vertices are connected to the root by some $y-$monotone path and the sum of the Euclidean lengths of its edges is the minimum.
We show that the maximum degree of a rooted $y-$MMST is not bounded by a constant number. We give a linear time algorithm that draws any rooted tree as a rooted $y-$MMST and also show that there exist rooted trees that can be drawn as rooted $y-$MMSTs only in a grid of exponential area.
\end{abstract}

\section{Introduction}

Given a point set $P$ with $n$ points, the \emph{Euclidean Minimum Spanning Tree (EMST)} of $P$, i.e.{} the spanning geometric graph of $P$ where the sum of the Euclidean lengths of its edges is the minimum, can be obtained in $O(n\cdot\log n)$ time~\cite{ShaH75}. 
A Euclidean Minimum Spanning Tree of a point set is of maximum degree at most six~\cite{MonS92} and a Euclidean Minimum Spanning Tree of maximum degree at most five can always be found~\cite{MonS92}.
The problem of drawing a tree of maximum degree at most five as a EMST can be efficiently solved~\cite{MonS92}.
On the other hand, for a tree of maximum degree six the same problem is NP-hard~\cite{EadW96}.
Regarding the area requirement of the drawing of a tree of maximum degree five as a EMST, the algorithm of Monma and Suri~\cite{MonS92} produces a drawing in a grid of exponential area. 
Additionally, there exist trees of maximum degree at most five for which no drawing as EMST lies on a grid of polynomial area~\cite{AngBCFKS14}.

Given a rooted point set $P$ of size $n$ with root $r$, the spanning geometric graph of $P$ in which each point of $P$ is connected with $r$ by a $y-$monotone path and the sum of the Euclidean lengths of its edges is the minimum is a tree~\cite{MasS17}, denoted as the \emph{rooted $y-$Monotone Minimum Spanning Tree (rooted $y-$MMST)} of $P$ in~\cite{MasS17}, and can be obtained in $O(n\log^2 n)$ time~\cite{MasS17}.
In contrast to the case of Euclidean Minimum Spanning Tree, it is not known whether the maximum degree of a rooted $y-$MMST is bounded by a constant number,  whether a drawing of a rooted tree as a rooted $y-$MMST can be efficiently produced and if such a drawing necessarily lies on a grid of exponential area.

The restricted fathers tree problem which was studied by Guttman-Beck and Hassin~\cite{Gut-BecH10} is related to the problem of obtaining the rooted $y-$MMST of a rooted point set. 
The goal of the \emph{restricted fathers tree problem} is to obtain the minimum spanning tree $T$ of a weighted rooted graph $G$ where each vertex of $G$ contains a key, where in $T$ the root is connected to all other vertices by paths in which the keys of the traversed vertices form a decreasing sequence.
The restricted fathers tree problem is greedily solvable~\cite[Corollary 2.6]{Gut-BecH10}.

Much research has been done in drawing a rooted tree under several aesthetical drawing conventions~\cite[Section 3.1]{DiBatETT99}.
The problem of finding a drawing w.r.t.{} minimizing the area is widely studied~\cite{DiBatF14}.
Recently, Chan~\cite{Cha18} improved the area requirements of several types of tree drawings.
As far as monotonicity is concerned, drawing a rooted tree such that each child vertex is mapped to a point with $y$ coordinate less than or equal to the $y$ coordinate of the point to which its parent is mapped with the goal of optimizing the area of the drawing is widely investigated, e.g.{} see~\cite[Chapter 3]{Shi76},~\cite{CreDiBP92},~\cite{Cha18}.
Furthermore, drawing a rooted tree in the plane such that each pair of points is connected by a path that is monotone in some direction, in a grid of small area, is thoroughly investigated~\cite{AngCBFP12,HeH17,OikS17}.

In this article, we show that the maximum degree of a rooted $y-$MMST is not bounded by any constant number. We give a linear time recursive algorithm that draws a rooted tree as a rooted $y-$MMST and we show that there exist rooted trees that can be drawn as a rooted $y-$MMST only in a grid of exponential area (and not in a grid of polynomial area).

\section{Preliminaries}

Let $a,b$ points of the plane. The vector from $a$ to $b$ is denoted as $\overrightarrow{ab}$.

A \emph{geometric graph} $G = (P,L)$ is a pair of (i) a point set $P$ which is its \emph{vertex} set and (ii) a set of line segments $L$ connecting points of $P$ which is its \emph{edge} set.
A \emph{geometric path} $(p_1$, $p_2$, \ldots, $p_k)$ is a geometric graph with $\{p_1$, $p_2$, \ldots, $p_k\}$ as its vertex set and $\{\overline{p_1p_2}$, $\overline{p_2p_3}$, \ldots, $\overline{p_{k-1}p_k}\}$ as its edge set.
The geometric path $(p_1$, $p_2$, \ldots, $p_k)$ is $y-$monotone if either (i) for each $i =1$, $2$, \ldots, $k-1$ the $y$ coordinate of $p_{i+1}$ is greater than or equal to the $y$ coordinate of $p_i$, or (ii) for each $i =1$, $2$, \ldots, $k-1$ the $y$ coordinate of $p_{i+1}$ is less than or equal to the $y$ coordinate of $p_i$.

Lee and Preparata~\cite{LeeP77} showed that in a geometric graph where all vertices have different $y$ coordinates, each vertex $u$ is connected to the vertex $B$ with the smallest $y$ coordinate by some $y-$monotone path if and only if each vertex $u \neq B$ is connected to some vertex $v$ where the $y$ coordinate of $v$ is smaller than the $y$ coordinate of $u$.

\begin{fact}[Corollary 3 of~\cite{MasS17}\footnote{It was recently brought to our attention that Fact~\ref{fact:yMMST-char} can also be obtained as a Corollary of~\cite[Corollary 2.6]{Gut-BecH10}.}] \label{fact:yMMST-char}
Given a geometric tree $T$ with root $r$ where all vertices have different $y$ coordinates and each vertex different from the root has $y$ coordinate greater than the $y$ coordinate of the root, $T$ is the rooted $y-$MMST of its vertex set if and only if each vertex different from the root is linked to a single vertex below it, its nearest vertex.  
\end{fact}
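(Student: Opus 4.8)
The plan is to prove both directions at once by comparing the length of $T$ against a single sharp lower bound valid for every candidate graph. Write $d(a,b)$ for the Euclidean distance between $a$ and $b$, and for each non-root vertex $u$ let $\nu(u)$ be a vertex of minimum distance from $u$ among all vertices lying \emph{below} $u$ (those with strictly smaller $y$ coordinate); such a vertex exists because the root lies below every other vertex. First I would establish the lower bound. Let $H$ be any spanning geometric graph of the vertex set in which every vertex is joined to $r$ by a $y-$monotone path. By the stated result of Lee and Preparata~\cite{LeeP77}, every non-root vertex is then adjacent to at least one vertex below it. Assigning to each non-root vertex one such downward edge yields $n-1$ pairwise distinct edges (each is assigned only to its higher endpoint, and these endpoints are distinct), and the edge assigned to $u$ has length at least $d(u,\nu(u))$. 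Hence the total length of $H$ is at least $\sum_{u\neq r} d(u,\nu(u))$.

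Next I would exhibit a graph attaining this bound. Let $G^{*}$ be the spanning graph that joins every non-root vertex $u$ to $\nu(u)$; it has exactly $n-1$ edges. Following these edges strictly decreases the $y$ coordinate, so $G^{*}$ is acyclic, and from any vertex the iterated map $u\mapsto\nu(u)$ must terminate at the unique lowest vertex, namely $r$; thus every vertex reaches $r$ by a strictly decreasing, hence $y-$monotone, path. Therefore $G^{*}$ is a spanning tree in which every vertex is joined to $r$ by a $y-$monotone path, and its length is exactly $\sum_{u\neq r} d(u,\nu(u))$, meeting the lower bound. Consequently $G^{*}$ attains the minimum and is a rooted $y-$MMST of the vertex set.

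Finally I would read off the characterization, using one structural observation: in \emph{any} tree on these $n$ vertices in which every vertex is joined to $r$ by a $y-$monotone path, each non-root vertex has exactly one neighbor below it. Orienting every edge from its higher to its lower endpoint, the monotone-path condition forces out-degree at least one at each of the $n-1$ non-root vertices, while the sum of all out-degrees equals the number of edges, $n-1$; hence every non-root vertex has out-degree exactly one. This makes the phrase \emph{``a single vertex below it''} automatic, and the length of such a tree equals $\sum_{u\neq r} d\bigl(u,\mathrm{down}(u)\bigr)$, where $\mathrm{down}(u)$ is the unique lower neighbor of $u$. For the $(\Leftarrow)$ direction, if $\mathrm{down}(u)=\nu(u)$ for every $u$ then $T=G^{*}$, so $T$ is the rooted $y-$MMST. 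For the $(\Rightarrow)$ direction, if $T$ is the rooted $y-$MMST then its length equals the bound; since each term satisfies $d(u,\mathrm{down}(u))\ge d(u,\nu(u))$ and the totals agree, every term is minimal, forcing $\mathrm{down}(u)=\nu(u)$, i.e.{} each non-root vertex is linked to its nearest vertex below.

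The step I expect to require the most care is matching the optimum over \emph{arbitrary} spanning graphs with the explicitly built tree $G^{*}$: the lower bound must be extracted using precisely the downward-edge structure supplied by Lee and Preparata so that it is tight, and the forward implication must combine this tightness with the out-degree count to conclude that the single lower neighbor is the nearest one. A minor point to settle is the possible non-uniqueness of a nearest vertex below (ties in distance); under the standing assumption of distinct $y$ coordinates one may either assume all relevant distances are distinct, or simply note that any choice of a nearest lower vertex produces a rooted $y-$MMST of the same minimum length, so the characterization is unaffected.
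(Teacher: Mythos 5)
The paper offers no proof of this statement: it is imported verbatim as Corollary~3 of~\cite{MasS17} (with a footnote that it also follows from~\cite[Corollary 2.6]{Gut-BecH10}), so there is nothing in-source to compare against. Judged on its own, your argument is correct and complete. The lower bound is sound: by the Lee--Preparata characterization every feasible spanning graph gives each non-root vertex at least one neighbor below it, the assignment of one such edge to its \emph{higher} endpoint produces $n-1$ pairwise distinct edges, and each is at least $d(u,\nu(u))$ long. The graph $G^{*}$ attains the bound and is a spanning tree with the required monotone paths, and the out-degree count (orienting edges downward, out-degree $\geq 1$ at each of the $n-1$ non-root vertices, total out-degree $=n-1$) correctly shows that \emph{any} feasible tree has exactly one lower neighbor per non-root vertex, after which both directions of the equivalence fall out of the tightness of the bound. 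The one caveat is the one you already flag: with ties in distance the nearest vertex below $u$ need not be unique, so the forward direction literally yields that each non-root vertex is linked to \emph{a} nearest vertex below it; the Fact's phrasing ``its nearest vertex'' implicitly assumes the kind of general position under which this is unambiguous, and your remark that any tie-breaking choice yields a minimum-length tree is the right way to dispose of this. This exchange-style argument is essentially the standard one underlying the cited sources (it is the same greedy optimality principle as in~\cite{Gut-BecH10}), so no genuinely different machinery is being introduced.
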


The \emph{depth} of a rooted tree $T$ is the maximum number of edges traversed in a path from the root of $T$ to another vertex of $T$.

A \emph{straight-line drawing} $D$ of a graph $G = (V,E)$ is a function that associates each $u \in V$ to a point of the plane and each $\{u,v\} \in E$ to a line segment of the plane connecting the images of $u$ and $v$.
In this article, we only deal with straight-line drawings, hence we omit the term straight-line from now on.  

\section{Drawing a Rooted Tree as a Rooted $y-$Monotone Minimum Spanning Tree}

In this section we show that for each natural number $M$ we can find a rooted point set $P$ for which the rooted $y-$MMST of $P$ is of maximum degree $M$. 
We also give a linear time recursive algorithm that draws a tree $T$ with root $r$ as a rooted $y-$MMST. 
Finally, we prove that there exist rooted trees that can be drawn as rooted $y-$MMSTs only in a grid of exponential area.

\begin{proposition} \label{prop:UnboundedDegree}
The maximum degree of a rooted $y-$MMST is not bounded by a constant number.
\end{proposition}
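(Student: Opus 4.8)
The plan is to exhibit, for every natural number $M$, a rooted point set whose rooted $y-$MMST has a vertex (the root) of degree $M$; since $M$ is arbitrary, this shows the maximum degree is unbounded. By Fact~\ref{fact:yMMST-char}, in the rooted $y-$MMST every non-root vertex is joined to its nearest vertex lying strictly below it (i.e.\ of smaller $y$ coordinate). Hence it suffices to place the root $r$ together with $M$ further points $p_1,\dots,p_M$, all above $r$ and with pairwise distinct $y$ coordinates, so that for each $i$ the vertex of $\{r,p_1,\dots,p_M\}$ nearest to $p_i$ among those below $p_i$ is $r$ itself. If this holds, all $M$ points attach to $r$ and $r$ has degree $M$.

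First I would reduce the geometric condition to an algebraic one. Put $r$ at the origin, and order the points so that $p_i$ has the $i$-th smallest $y$ coordinate; then the vertices below $p_i$ are exactly $r,p_1,\dots,p_{i-1}$, so the requirement becomes $|p_i| < |p_i - p_j|$ for all $j < i$. Squaring and simplifying, this is equivalent to the perpendicular-bisector (Voronoi) condition
\begin{equation*}
\langle p_i, p_j\rangle < \tfrac12\,|p_j|^2 \qquad\text{for all } j < i,
\end{equation*}
where $\langle\cdot,\cdot\rangle$ denotes the inner product; geometrically, $p_i$ must lie strictly on the $r$-side of the bisector of $r$ and each earlier point $p_j$.

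The naive attempts to satisfy all these constraints at once fail: if the $p_i$ are placed at a common distance from $r$, the condition forces pairwise angular separations exceeding $60^\circ$, capping the number of points at a constant; and making the distances to $r$ decrease with height is impossible, since the $y$ coordinate of a point can never exceed its distance to the root. The obstacle, and the crux of the construction, is therefore to decouple the two roles played by the coordinates of a point: its distance from $r$, which governs the bisector condition, and its height, which only fixes the below-ordering. I would resolve this by pushing the lower points far out horizontally while letting heights grow only slightly, e.g.\ taking
\begin{equation*}
p_i = \bigl(c^{\,M-i},\, i\bigr), \qquad i = 1,\dots,M,
\end{equation*}
for a constant $c>1$ to be fixed. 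The heights $y_i = i$ are distinct and increasing, so the below-order is $p_1,\dots,p_M$, while the horizontal coordinates $c^{\,M-i}$ decrease geometrically, so a lower point $p_j$ sits far to the right of a higher point $p_i$ and is never the nearest competitor.

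It remains to verify the inner-product inequalities for this choice. Writing them out, for $j<i$ the condition reads $c^{2M-i-j} + ij < \tfrac12 c^{2M-2j} + \tfrac12 j^2$; since the exponent gap $(2M-2j)-(2M-i-j) = i-j$ is strictly positive, the right-hand side dominates for large $c$, the bounded terms $ij$ and $\tfrac12 j^2$ being negligible. Inspecting the tightest case $i = j+1$ shows that any $c$ large relative to $M$ (for instance $c = M+2$) makes every inequality hold. Then each $p_i$ is strictly closer to $r$ than to every vertex below it, so by Fact~\ref{fact:yMMST-char} all $p_i$ are joined to $r$ in the rooted $y-$MMST and $r$ has degree $M$, completing the argument.
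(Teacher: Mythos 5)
Your proof is correct and takes essentially the same approach as the paper: both place $M$ points above the root in a staircase whose horizontal gaps grow exponentially, so that each point is strictly closer to the root than to any other point below it, and then invoke Fact~\ref{fact:yMMST-char}. The only differences are cosmetic --- you give the $x$ coordinates in closed form $c^{M-i}$ and verify via the bisector inequality $\langle p_i,p_j\rangle < \tfrac12|p_j|^2$, whereas the paper defines them by a recursion and compares distances directly; your claim that $c=M+2$ suffices does check out, with the binding case $i=M$, $j=M-1$ reducing to $(c-1)^2>M^2$.
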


\begin{proof}
For each $M \in \mathbb{N}$, we consider a rooted point set with $M+1$ points, in which all points different from the root $r$ are higher than $r$ and in increasing order w.r.t.{} the $x$ coordinate and decreasing order w.r.t.{} the $y$ coordinate. 
More specifically, the point set consists of the points $r = (0,0)$, $p_1 = (x_1 = 1, M)$, $p_2  = (x_2, M-1)$, \ldots, $p_i = (x_i, M+1-i)$, \ldots, $p_M = (x_M, 1)$ where $x_{i+1} = x_i + 1 + \lfloor\sqrt{x^2_i+(M+1-i)^2 - 1}\rfloor$, $i = 1$, $2$, \ldots, $M-1$.
Then, for each $p_i$, $i = 1$, $2$, \ldots, $M$, the closest point of $P$ that has $y$ coordinate less than the $y$ coordinate of $p_i$ is the root, since for each $1 \leq i \leq M-1$, $d(p_i, p_{i+k}) \geq d(p_i, p_{i+1})$ $= \sqrt{(x_{i+1} - x_i)^2+1}$ $ > \sqrt{x^2_i+(M+1-i)^2} = d(p_i, r)$ and for $p_M$ the root is the single point of $P$ with less $y$ coordinate than it.
Hence, by Fact~\ref{fact:yMMST-char} the rooted $y-$MMST of $P$ is the tree where each $p_i$, $1 \leq i \leq M$, is connected just to $r$.
\end{proof}

We now give our algorithm that draws a rooted tree $T$ as a rooted $y-$MMST.
Our algorithm is recursive.
It first draws the subtrees $T_1$, $T_2$, \ldots, $T_M$ with roots $p_1$, $p_2$, \ldots, $p_M$, that are connected to the root $r$, as rooted $y-$MMSTs and then computes appropriate vectors $\overrightarrow{rp_1}$, $\overrightarrow{rp_2}$, \ldots, $\overrightarrow{rp_M}$ such that the final drawing is a rooted $y-$MMST. 
In each vertex $u \neq r$ of $T$ we assign the vector $\overrightarrow{p(u)u}$, where $p(u)$ denotes the parent of $u$ in $T$.
We also store in each vertex $u$ of $T$ the width (\emph{width$(u)$}) and the height (\emph{height$(u)$}) of the \emph{bounding box}, i.e.{} the box of the minimum area in which the drawing is included, of the subtree rooted at $u$. 

If $T$ has depth $1$, and the children of $r$ are the vertices $p_1$, $p_2$, \ldots, $p_M$, then we use the same construction that we used in Proposition~\ref{prop:UnboundedDegree}. 
In more detail, we assign $\overrightarrow{rp_1} = (x_1 = 1,M)$, $\overrightarrow{rp_2} = (x_2, M-1)$, \ldots, $\overrightarrow{rp_i} = (x_i, M+1-i)$, \ldots, $\overrightarrow{rp_M} = (x_M, 1)$ where $x_{i+1} = x_i + 1 + \lfloor\sqrt{x^2_i+(M+1-i)^2 - 1}\rfloor$, $1 \leq i \leq M-1$.
Furthermore, height$(r) = M$ and width$(r) = x_M$.
For an example see Figure~\ref{fig:illDepth1}.
In this way, as shown in Proposition~\ref{prop:UnboundedDegree}, the root $r$ is the closest point below for each $p_i$, $i = 1$, $2$, \ldots, $M$, and hence by Fact~\ref{fact:yMMST-char} our drawing is a rooted $y-$MMST. 

\begin{figure}[hbtp]
\center
\includegraphics{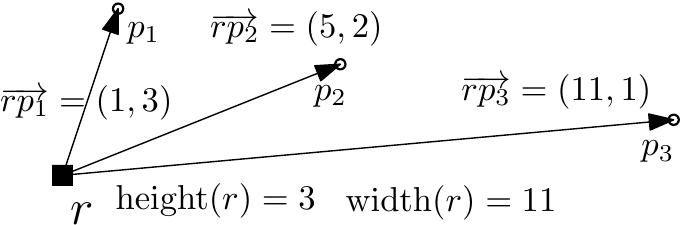}
\caption{Illustration of the drawing of a rooted tree of depth $1$ where the root has $3$ children.}
\label{fig:illDepth1}
\end{figure}

If the tree has depth at least $2$, let $p_1$, $p_2$, \ldots, $p_M$ be the children of $r$ and $T_1$, $T_2$, \ldots, $T_M$ be the corresponding subtrees.
We recursively draw the subtrees $T_1$, $T_2$, \ldots, $T_M$ and then we assign the values of the vectors $\overrightarrow{rp_1}$, $\overrightarrow{rp_2}$, \ldots, $\overrightarrow{rp_M}$ in analogous way to the case that the tree has depth $1$, but this time the $x$ and $y$ coordinates of the vectors are larger such that the final drawing is a rooted $y-$MMST.
More specifically, let $\overrightarrow{rp_i} = (x_i, y_i)$, $i$ $=$ $1$, $2$,\ldots, $M$ then $y_M = 1$, $y_{i-1}$ $=$ $y_i$ $+$ height$(p_i) + 1$, $i$ $=$ $M$, $M-1$, \ldots, $2$, since each $T_{i-1}$ is drawn above $T_i$.
Regarding the $x$ coordinates of the vectors, in order that the closest point below for the vertices in the final drawing is their parent and hence by Fact~\ref{fact:yMMST-char} the final drawing is a rooted $y-$MMST, the $x$ coordinates of the vectors satisfy the following recursive formula:
 $x_1 = 1$ and $x_{i+1} = \max\{x_i$ $+$ $1$ $+$ $\lfloor\sqrt{x^2_i + y^2_i - 1}\rfloor$, $x_i$ $+$ width$(p_i)$ $+$ $1$ $+$ $\lfloor\sqrt{\textrm{width}^2(p_i)+\textrm{height}^2(p_i)-1}\rfloor \}$.
The $x$ coordinates of the vectors satisfy this recursive formula because 
(i) $x_{i+1}$ has to be greater than or equal to $x_i$ $+$ $1$ $+$ $\lfloor\sqrt{x^2_i + y^2_i - 1}\rfloor$, $i$ $=$ $1$, $2$,\ldots, $M-1$, such that for each vertex $u$ in $T_{i+1}$ it holds that $d(p_i,u)$ $\geq$ $\sqrt{(x_{i+1}-x_i)^2+1}$ $>$ $d(p_i,r)$ and hence the closest point to $p_i$ below $p_i$ is $r$, and (ii) $x_{i+1}$ has to be greater than or equal to $x_i$ $+$ width$(p_i)$ $+$ $1$ + $\lfloor\sqrt{\textrm{width}^2(p_i)+\textrm{height}^2(p_i)-1}\rfloor$, 
$i$ $=$ $1$, $2$,\ldots, $M-1$, such that for each vertex $u_i$ in $T_{i}\setminus\{p_i\}$ and each $u_{i+1}$ in $T_{i+1}$ it holds that $d(u_i,u_{i+1})$ $\geq$ $\sqrt{(x_{i+1}-x_i-\textrm{width}(p_i))^2+1}$ $>$ $d(u_i,p_i)$ and hence the closest point below to each vertex $u_i$ in $T_{i}\setminus\{p_i\}$ remains its parent which lies inside the drawing of $T_i$.
For an example see Figure~\ref{fig:illOfScaling}.   
From the above discussion, we get the following Theorem.
\begin{figure}[hbtp]
\center
\includegraphics[scale=0.55]{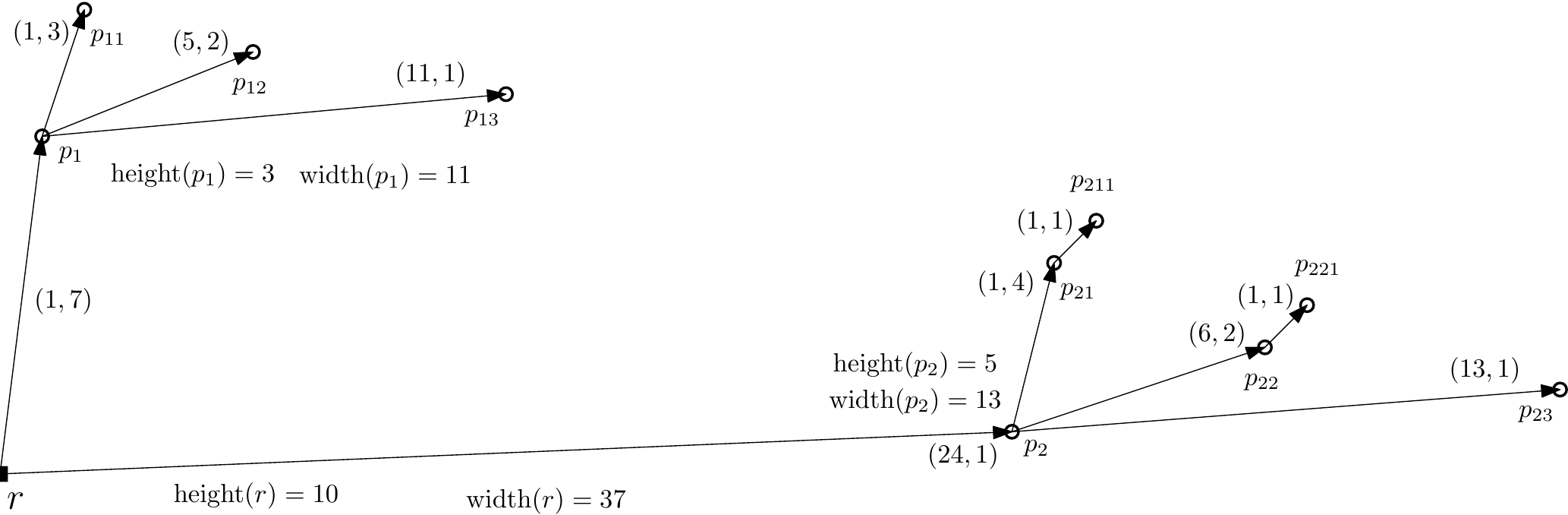}
\caption{
When drawing the subtree rooted at $p_2$, for the vector $\protect\overrightarrow{p_2p_{22}} = (x_2,y_2)$ it holds that $x_2$ $=$ $x_1$ $+$ $1$ $+$ $\lfloor\sqrt{x^2_1 + y^2_1 - 1}\rfloor$ $=$ $1$ $+$ $1$ $+$ $\lfloor\sqrt{1^2 + 4^2 - 1}\rfloor$ $=$ $6$.
In the final drawing, for the vector $\protect\overrightarrow{rp_2} = (x'_2,y'_2 = 1)$ it holds that $x'_2$ = $1$ $+$ width$(p_1)$ $+$ $1$ $+$ $\lfloor\sqrt{\textrm{width}^2(p_1)+\textrm{height}^2(p_1)-1}\rfloor $ $=$ $1$ $+$ $11$ $+$ $1$ $+$ $\lfloor\sqrt{11^2+3^2-1}\rfloor$ $=$ $24$.
}
\label{fig:illOfScaling}
\end{figure}

\begin{theorem}\label{thm:DrawingResult}
We can draw a rooted tree as a rooted $y-$Monotone Minimum Spanning Tree in linear time.
\end{theorem}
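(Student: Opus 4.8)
The plan is to establish correctness by induction on the depth $D$ of the rooted tree $T$, using Fact~\ref{fact:yMMST-char} as the certificate, and to handle the time bound by a separate counting argument. The base case $D=1$ is exactly the construction of Proposition~\ref{prop:UnboundedDegree}, where $r$ is already verified to be the nearest vertex below each of its children. To feed the induction I would, at every recursion level, record two invariants that the algorithm maintains: that $r$ is placed at the bottom-left corner of the bounding box of the whole drawing (every other vertex receiving strictly positive coordinates relative to $r$), and that $\mathrm{width}(r)$, $\mathrm{height}(r)$ hold the true dimensions of that box. For the inductive step ($D\ge 2$) I would assume each subtree $T_i$ has already been drawn as a rooted $y-$MMST with its root $p_i$ at the bottom-left corner of its own box, and then verify that assigning the vectors $\overrightarrow{rp_1},\dots,\overrightarrow{rp_M}$ as described produces a drawing meeting the hypotheses and the nearest-vertex-below condition of Fact~\ref{fact:yMMST-char}.

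Checking the hypotheses is routine: the offsets $y_{i-1}=y_i+\mathrm{height}(p_i)+1$ place the subtrees in pairwise disjoint horizontal strips separated by unit gaps, so with the induction hypothesis all $y$-coordinates are distinct, and since $y_M=1>0$ every non-root vertex lies above $r$. The substance is the nearest-vertex-below property. For a child $p_i$ of $r$, the only vertices lying below it are $r$ and the vertices of the lower subtrees $T_{i+1},\dots,T_M$; inequality (i) forces each vertex of $T_{i+1}$ (hence, as the subtrees occupy increasing and disjoint $x$-ranges, of every lower subtree) to be farther from $p_i$ than $r$, so $r$ is its nearest vertex below.

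The main work is the \emph{internal} vertices. For such a $u$ in a subtree $T_i$, with parent $v$ and any other vertex $w$ below $u$, I would show $d(u,w)>d(u,v)$ by cases on the location of $w$. If $w\in T_i$, the induction hypothesis (Fact~\ref{fact:yMMST-char} applied to $T_i$) says $v$ is the nearest vertex below $u$ inside $T_i$. Vertices of the higher subtrees $T_1,\dots,T_{i-1}$ lie in higher strips, so none is below $u$. If $w$ lies in a lower subtree $T_j$ with $j>i$, inequality (ii) makes $d(u,w)$ exceed the diagonal $\sqrt{\mathrm{width}^2(p_i)+\mathrm{height}^2(p_i)}$ of the box of $T_i$ (using the horizontal gap together with the unit vertical separation between strips), and this diagonal bounds $d(u,v)$. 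The remaining case $w=r$ is, I expect, the only genuinely delicate one, since inequalities (i) and (ii) compare vertices only against the \emph{other} subtrees, never against the root. Here I would exploit the bottom-left invariant: writing $u=(a,b)$ and $p_i=(x_i,y_i)$ with $a\ge x_i>0$ and $b\ge y_i>0$ because $p_i$ is the bottom-left corner of $T_i$, a direct expansion gives $d(u,r)^2-d(u,p_i)^2=x_i(2a-x_i)+y_i(2b-y_i)>0$, so $d(u,r)>d(u,p_i)$; and since $p_i$ is a vertex of $T_i$ below $u$, the nearest-below vertex satisfies $d(u,v)\le d(u,p_i)<d(u,r)$. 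Combining the cases, every non-root vertex is joined precisely to its nearest vertex below, so Fact~\ref{fact:yMMST-char} certifies that the drawing is a rooted $y-$MMST; I would then re-derive the bottom-left invariant (all drawn coordinates positive) and update $\mathrm{width}(r)$, $\mathrm{height}(r)$ to close the induction.

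For the time bound I would argue that the recursion visits each vertex once and, at a node with $M$ children, spends $O(M)$ time evaluating the recurrences for the $x_i,y_i$ and the bounding-box dimensions; since the sum of the number of children over all nodes equals the number of edges, the total is $O(n)$. I would point out that this counts each arithmetic operation as unit cost (the real-RAM / unit-cost model), the appropriate convention given that, as shown later in the paper, the coordinates themselves may require exponentially many bits.
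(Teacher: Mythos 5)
Your proposal follows the paper's construction and argument exactly --- the same recurrences for the $x_i$ and $y_i$, the same appeal to Fact~\ref{fact:yMMST-char} via the nearest-vertex-below condition, and the same $O(n)$ accounting over children --- so it is essentially the paper's own proof, organized as an explicit induction on depth. It is in fact slightly more complete than the paper's discussion: the case you single out as delicate (verifying $d(u,r) > d(u,p_i) \geq d(u,v)$ for an internal vertex $u$ of $T_i$ via the bottom-left invariant) is a comparison the paper leaves implicit, and your handling of it is correct.
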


We now show that drawing some rooted trees as rooted $y-$MMSTs can not be performed in a grid of polynomial area. 

\begin{theorem} \label{thm:exponentialGrid} There exist rooted trees which can be drawn as rooted $y-$Monotone Minimum Spanning Trees only in a grid of exponential area.
\end{theorem}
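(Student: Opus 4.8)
The plan is to exhibit an explicit family of rooted trees $T_M$ on $\Theta(M)$ vertices that admit a rooted $y-$MMST drawing yet whose every grid drawing as a rooted $y-$MMST has bounding box of area $2^{\Omega(M)}$. The drawability direction is free from Theorem~\ref{thm:DrawingResult}, so all the work is the lower bound. The natural candidate is the \emph{fan} used in Proposition~\ref{prop:UnboundedDegree}: the star with root $r$ and $M$ leaves $p_1,\ldots,p_M$. First I would translate ``being a rooted $y-$MMST drawing'' into geometric separation constraints via Fact~\ref{fact:yMMST-char}: in any such drawing the root is the unique lowest point (otherwise no $y-$monotone paths to it exist), the leaves receive $M$ distinct positive integer $y$ coordinates, and for every leaf $p_i$ and every leaf $p_j$ strictly below it one must have $d(p_i,p_j)>d(p_i,r)$. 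Equivalently, $p_j$ must lie outside the open disk $D_i$ centered at $p_i$ and passing through $r$.

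Next I would order the leaves by decreasing $y$ coordinate as $p_1,\ldots,p_M$. Since they need $M$ distinct positive integer heights, the height of the drawing is already at least $M$; the crux is a matching exponential lower bound on the width. The key estimate is that consecutive leaves are forced far apart relative to the root: writing $R_i=d(p_i,r)$ and $p_i=(x_i,y_i)$ with $r$ at the origin, the exclusion $p_{i+1}\notin D_i$ gives $d(p_i,p_{i+1})>R_i$, and $R_i\ge |x_i|$. Hence whenever the leaves are laid out with increasing $x$ and small vertical gaps, this forces $x_{i+1}-x_i>|x_i|$, so $x_{i+1}>2x_i$ and the horizontal extent grows geometrically, giving width $2^{\Omega(M)}$. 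Multiplying by the height then yields area $2^{\Omega(M)}=2^{\Omega(n)}$.

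The main obstacle is that the bound must hold for \emph{every} drawing, whereas the clean doubling above presumes a monotone left-to-right layout. An adversary could instead try to reset the growth by placing some leaf close to $r$ (shrinking its disk $D_i$) or by spreading the leaves widely in $y$. I would rule these out with a half-plane observation: near $r$ the region outside $D_i$ is essentially the half-plane $\{x:\langle x-r,\,p_i-r\rangle<0\}$, and the intersection of such half-planes over directions $p_i-r$ spanning an angle close to or exceeding $\pi/2$ meets only the lower half-plane. Thus a leaf placed near $r$ in the open upper half-plane cannot simultaneously avoid the origin-disks of several angularly spread higher leaves, which kills the reset; the complementary case, where the leaves are angularly clustered, forces radial doubling directly. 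Formalizing this dichotomy, so as to control the angular spread and the horizontal extent simultaneously over all admissible integer $y$ assignments, is the technical heart of the argument.

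If the pure star proves delicate to analyze, I would fall back on a recursively defined tree tailored to the two-term growth preceding Theorem~\ref{thm:DrawingResult}: replace each leaf by a small subtree engineered to force a long parent edge, so that the separation constraint $d(u_i,u_{i+1})>d(u_i,p(u_i))$ between consecutive subtrees (coming again from Fact~\ref{fact:yMMST-char}) forces the bounding-box width to at least double at each of $\Omega(n)$ recursion levels. This replaces the subtle angular argument with a transparent, self-reducing width recursion, at the cost of a less elegant tree. In either route the deliverable is the same: a rooted tree drawable as a rooted $y-$MMST by Theorem~\ref{thm:DrawingResult}, whose width, and hence area, is provably $2^{\Omega(n)}$ in every grid drawing.
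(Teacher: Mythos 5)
You have the right example (the star) and the right doubling mechanism ($d(p_i,p_{i+1})>d(p_i,r)$ forcing $x_{i+1}>2x_i$), but the proof is not complete: the step you yourself label ``the technical heart of the argument'' --- controlling angular spread, resets near $r$, and vertical spreading simultaneously over all admissible drawings --- is exactly the content of the theorem, and it is left as an unformalized dichotomy. As written, the doubling only goes through ``whenever the leaves are laid out with increasing $x$ and small vertical gaps,'' which is an assumption on the adversary's drawing, not a conclusion. The half-plane heuristic is also shaky as stated: if the higher leaves are angularly clustered, a lower leaf \emph{can} legally sit close to $r$ outside all their disks, and your claim that the clustered case ``forces radial doubling directly'' is asserted, not proved. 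The fallback recursive tree is likewise only a plan.

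The paper closes precisely this gap with two concrete devices you are missing. First, it takes $K_{1,4n+4}$ and applies the pigeonhole principle: at least $n+1$ leaves land in one quadrant, say the first. Within a single quadrant the characterization of Fact~\ref{fact:yMMST-char} forces the $x$ coordinates to be pairwise distinct and, once the points are ordered by increasing $x$, forces the $y$ coordinates to be strictly decreasing (if $x(p_i)<x(p_j)$ and $y(p_i)<y(p_j)$ with both in the open first quadrant, then $d(p_j,p_i)<d(p_j,r)$, a contradiction). This eliminates the angular-spread and reset issues in one stroke. Second, to kill the vertical-spreading escape, the paper ``pushes down'' the $n+1$ points to the canonical heights $n+1-i$ while keeping their $x$ coordinates, and proves via a circle-containment argument that the pushed-down drawing $D'$ is still a rooted $y-$MMST; in $D'$ consecutive heights differ by exactly $1$, so $(x(p'_{i+1})-x(p'_i))^2+1\ge x(p'_i)^2+(n+1-i)^2$ gives $x(p'_{i+1})\ge 2x(p'_i)$. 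Since $D'$ shares its $x$ coordinates with the original drawing, the original width is already exponential. You would need to supply both of these steps (or rigorous substitutes) before your argument constitutes a proof.
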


\begin{proof}
We show that drawing the rooted tree $K_{1,4n+4}, n \in \mathbb{N}$, i.e.{} the tree consisting of a root center vertex $r$ and $4n+4$ leaf vertices which are adjacent only to $r$, as a rooted $y-$MMST requires a grid of exponential width.
Consider, w.l.o.g.{} that $r$ is drawn at the origin of the Cartesian Coordinate System.
Then, by the pigeonhole principle at least $n+1$ of the vertices are all drawn in the same quadrant of the Cartesian Coordinate System.
Assume w.l.o.g.{} that this quadrant is the first quadrant, i.e.{} the quadrant containing the points with positive $x$ and $y$ coordinates.
By Fact~\ref{fact:yMMST-char} it follows that no two vertices can be mapped to points of the first quadrant with the same $x$ coordinate.
 Let the vertices, mapped in the first quadrant, be mapped to the points $p_1$, $p_2$, \ldots, $p_{n+1}$, with $x(p_i) < x(p_{i+1})$, $i$ $=$ $1$, $2$, \ldots, $n$.
By Fact~\ref{fact:yMMST-char} and since all the points $p_1$, $p_2$, \ldots, $p_{n+1}$ belong to the first quadrant, it follows that $y(p_1) > y(p_2) > \dots > y(p_{n+1})$ and hence $y(p_i) \geq n+1-i$.
We now show that the drawing $D$ restricted to the first quadrant has exponential area.
We do this as follows: We first show that $D$ has more area than another drawing $D'$, where $D'$ is a rooted $y-$MMST and that $D'$ has exponential width.
The intuition is to  ``translate'' the points $p_i$, $i$ $=$ $1$, $2$, \ldots, $n+1$, only in the $y$ coordinate as low as possible in order to obtain $D'$.
More specifically, $D'$ consists of the root mapped at the origin, and $p'_i$ where $x(p'_i) = x(p_i)$ and $y(p'_i) = n+1-i$, $i = 1,2, \ldots, n+1$ and the line segments $\overline{rp'_i}$, $i = 1,2, \ldots, n+1$. 
See for example Figure~\ref{fig:illustrationOfExponentialGrid}(a).
Clearly, $D'$ has less area than $D$.
Furthermore, since the circle $C_i$ with center $p_i$ and radius $d(p_i,r)$ contains in its interior the part of the circle $C'_i$ with center $p'_i$ and radius $d(p'_i,r)$ that is inside the first quadrant and both $p_{i+1}$ and $p'_{i+1}$ are outside $C_i$, then $d(p'_{i+1}, p'_i) > d(p'_i,r)$ (see for example Figure~\ref{fig:illustrationOfExponentialGrid}(b)). Hence, by Fact~\ref{fact:yMMST-char} it follows that $D'$ is a rooted $y-$MMST. 
In the $D'$, it is easy to show that the width is exponential. Since $(x(p'_{i+1})-x(p'_i))^2+1 \geq x(p'_i)^2+(n+1-i)^2$ it follows that $x(p'_{i+1}) \geq 2 x(p'_i), i = 1,2, \ldots, n$, and $x(p'_2) \geq n-1$.  
\end{proof}

\begin{figure}[hbtp]
\center
\begin{tabular}{cc}
\includegraphics[scale=0.7]{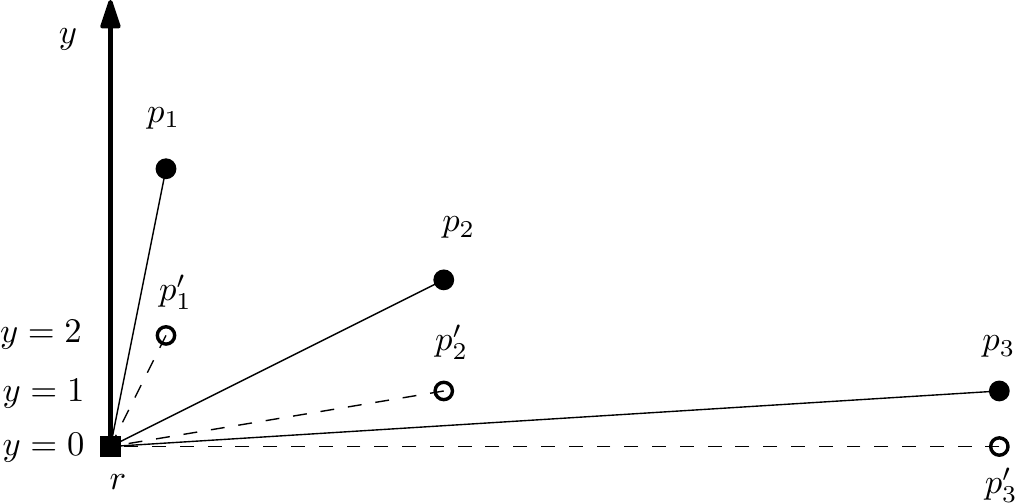} & \includegraphics[scale=0.5]{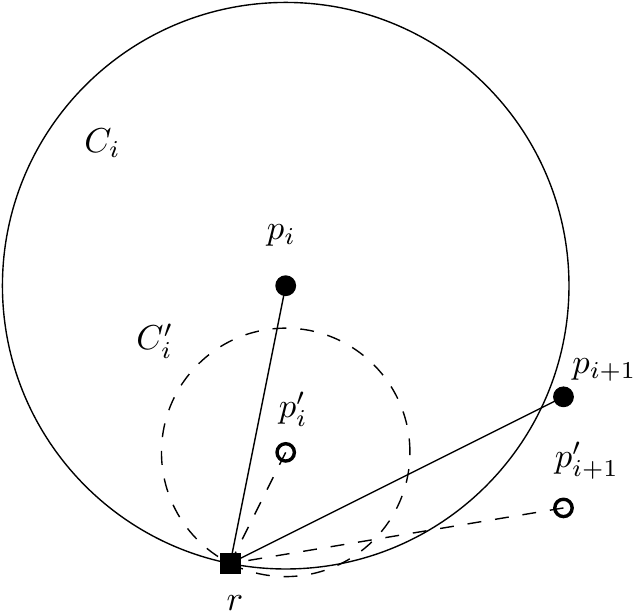} \\
(a) & (b)
\end{tabular}
\caption{Illustration of Theorem~\ref{thm:exponentialGrid}. In (a) the new drawing is illustrated and in (b) the feasibility of the new point $p'_{i+1}$ is shown based on the circles $C_i$ and $C'_i$.}
\label{fig:illustrationOfExponentialGrid}
\end{figure}

\section{Conclusions}

In this article, we showed that the maximum degree of a rooted $y-$MMST is not bounded by any constant. We gave a linear time recursive algorithm that draws a rooted tree as a rooted $y-$MMST in a grid of exponential area. We finally proved that there exist rooted trees that can be drawn as rooted $y-$MMSTs only in a grid of exponential area. 

\noindent \textbf{Acknowledgments.} The author would like to thank the Professor Alexander Arvanitakis and the Professor Aris Pagourtzis for the thoughtful discussions.
This research was financially supported by the Special Account for Research Grants of the National Technical University of Athens. 

\bibliographystyle{plain}
\bibliography{lib}

\end{document}